\def\eqref#1{equation~\ref{#1}}
\def\1{\bm{1}}
\DeclareMathAlphabet{\mathsfit}{\encodingdefault}{\sfdefault}{m}{sl}
\SetMathAlphabet{\mathsfit}{bold}{\encodingdefault}{\sfdefault}{bx}{n}
\DeclareMathOperator*{\argmin}{arg\,min}
\newtheorem{proposition}{Proposition}
\newcommand{\lf}[1]{{\color{green}LIAM---#1---}}
\newcommand{\jg}[1]{{\color{teal}JONAS---#1---}}
\icmltitlerunning{Preventing Unauthorized Use of Proprietary Data: Poisoning for Secure Dataset Release}
\begin{document}

\twocolumn[
\icmltitle{Preventing Unauthorized Use of Proprietary Data: \\ Poisoning for Secure Dataset Release}



\icmlsetsymbol{equal}{*}

\begin{icmlauthorlist}
\icmlauthor{Liam Fowl}{equal,math}
\icmlauthor{Ping-yeh Chiang}{equal,cs}
\icmlauthor{Micah Goldblum}{equal,cs}
\icmlauthor{Jonas Geping}{germany}
\icmlauthor{Arpit Bansal}{cs}
\icmlauthor{Wojtek Czaja}{math}
\icmlauthor{Tom Goldstein}{cs}
\end{icmlauthorlist}

\icmlaffiliation{math}{Department of Mathematics, University of Maryland}
\icmlaffiliation{cs}{Department of Computer Science, University of Maryland}
\icmlaffiliation{germany}{Dep. of Electrical Engineering and Computer Science, University of Siegen}

\icmlcorrespondingauthor{Liam Fowl}{lfowl@umd.edu}

\icmlkeywords{Machine Learning, ICML}

\vskip 0.3in
]



\printAffiliationsAndNotice{\icmlEqualContribution} 

\begin{abstract}
%
%
Large organizations such as social media companies continually release data, for example user images.  At the same time, these organizations leverage their massive corpora of released data to train proprietary models that give them an edge over their competitors.  These two behaviors can be in conflict as an organization wants to prevent competitors from using their own data to replicate the performance of their proprietary models.  We solve this problem by developing a data poisoning method by which publicly released data can be minimally modified to prevent others from training models on it.  Moreover, our method can be used in an online fashion so that companies can protect their data in real time as they release it.  We demonstrate the success of our approach on ImageNet classification and on facial recognition. 
\end{abstract}

\section{Introduction}
\label{intro}
 Social media companies and other web platforms allow users to post their own data in a space that is openly accessible to scraping \cite{taigman_deepface:_2014, cherepanova2021lowkey}. This data can be incredibly valuable, both to the organization hosting it and to others who leverage scraped data to train their own models. Breakthroughs in both image classification \cite{russakovsky_imagenet_2015} and language models \cite{brown2020language} have been enabled by large volumes of scraped data. Given that organizations value their exclusive access to the data they host for training competitive machine learning systems, they need a method for safely releasing their data on their web platform while preventing competitors from replicating the performance of their own models trained on this data. 

We introduce a method, motivated by new techniques in targeted data poisoning, to modify data prior to its release so that the generalization of a deep learning model trained on this data is significantly degraded, rendering the data effectively worthless to competitors. 

\subsection{Related Work}
\label{rel_work}

The topic of data manipulation for the purposes of performance degradation has been investigated in the \textit{data poisoning} literature. Specifically, this work is closely related to \textit{indiscriminate} (availability) poisoning attacks wherein an attacker wishes to degrade performance on a large number of samples \cite{barreno_security_2010}. Early works on this type of attack show that data can be maliciously modified to degrade test-time performance of simple classical algorithms, such as support vector machines, principle component analysis, clustering, logistic regression, etc., or in the setting of binary classification. \cite{munoz-gonzalez_towards_2017, xiao_is_2015, biggio_poisoning_2012, koh_stronger_2018, steinhardt_certified_2017}. \par 
In scenarios involving simple learning models, the optimal perturbation to training data can often be explicitly calculated via the implicit function theorem. However, this becomes computationally intractable for modern deep networks. As a consequence, little work has been done on indiscriminate attacks on deep networks. Recently, \citet{shen_tensorclog:_2019} proposed a heuristic to avoid having to explicitly solve the full bi-level objective. The method, TensorClog, crafts perturbations to cause gradient vanishing with the aim of preventing a deep network from training on the perturbed data, thus degrading test time performance of the network. However, this work only performs their method in the setting of transfer learning where a known feature extractor is used, limiting the viability of this attack. 

In contrast to indiscriminate attacks, \textit{targeted} (integrity) poisoning attacks aim to cause a network trained on modified data to mis-classify a few pre-selected target samples. Unlike the indiscriminate attack setting, recent work on targeted poisoning has successfully attacked modern deep networks trained from scratch on poisoned data \cite{geiping2020witches, huang_metapoison:_2020}. These attacks do not noticeably degrade validation accuracy, despite the victim network mis-classifying the selected target example(s). Moreover, simply performing a large number of targeted attacks to degrade overall validation accuracy is not feasible. In some cases, these attacks perturb up to $10\%$ of the training data in order to mis-classify a \textit{single} target image, and they often fail to attack more than a handful of targets \cite{geiping2020witches}. These attacks rely upon the expressiveness of deep networks to ``gerrymander" the decision boundary of the victim network around the selected target examples - a strategy that will not work in the general indiscriminate setting.

We develop an indiscriminate data poisoning attack which works on deep networks trained from scratch in a black-box setting. Our method allows practitioners to minimally modify data which, when released, causes models trained on this data to generalize poorly. Our method allows companies to release data, either for transparency purposes, or via user upload, which does not compromise the competitive advantage the company gains from asymmetric access to the clean data. For a general overview of data poisoning attacks, defenses, and terminology, see \citet{goldblum_dataset_2020}. 

Also parallel to the goals of this work are defenses to model stealing attacks. Model stealing attacks often aim duplicate a machine learning model or its functionality \cite{tramer2016stealing}. Defenses vary depending upon the attack scenario. For example, a defense proposed in
 \citet{orekondy2019prediction} perturbs the prediction outputs of a network to prevent a model stealing attack which aims to mimic the performance of a service like a cloud prediction API. Other defenses aim to prove theft of a model has occurred after the fact by ``watermarking" a network \cite{uchida2017embedding}. However, these defenses do little to stop malicious actors from using scraped data to train their own models \cite{clearviewai}. 
\section{Our Method}
\label{method}

\subsection{Problem Setup}
\label{setup}
Formally, we seek to compute perturbations $\Delta = \{ \Delta_i \}$ to elements $x_i$ of a dataset $\mathcal{S}$ in order to make a network, $F$, trained on the dataset generalize poorly to the distribution $\mathcal{D}$ from which $\mathcal{S}$ was sampled. Achieving this goal entails solving the following bi-level objective, 

 \begin{gather}\label{eq:bilevel}
     \max_{\Delta \in \mathcal{C}} \,\, \mathbb{E}_{(x,y) \sim \mathcal{D}} \bigg[ \mathcal{L} \left( F(x; \theta(\Delta)), y \right) \bigg] \\
     \text{s.t.} \,\, \theta(\Delta) \in \argmin_\theta\sum_{(x_i, y_i) \in \mathcal{S}} \mathcal{L}(F(x_i + \Delta_i; \theta), y_i),
\end{gather}
 where $\mathcal{C}$ denotes the constraint set which bounds the perturbations so that the perturbed data is perceptually similar to the clean data. In our work, we employ the $\ell_{\infty}$ constraint $\Vert\Delta\Vert_\infty < \epsilon$ as is standard in the adversarial literature \cite{madry_towards_2017, zhu_transferable_2019, geiping2020witches}. Constraining the perturbations in this fashion allows practitioners like social media companies to employ our method in order to release minimally changed user data while still protecting the performance of their proprietary models. 
 
 Directly solving for $\Delta$ which minimizes this objective is intractable as this would require backpropagating through the entire training procedure found in the inner objective (2) for each iteration of gradient descent on the outer objective. Thus, the bilevel objective must be approximated. 

\begin{algorithm}[bht!]
\caption{Crafting perturbations}
\label{alg:crafting}
\begin{algorithmic}[1]
    \STATE {\bfseries Require} pre-trained clean network $\{F(\cdot, \theta)\}$, 
    a dataset $\mathcal{S} = \{(x_i, y_i)\}_{i=1}^N$, perturbation bound $\varepsilon$,  restarts $R$, optimization steps $M$ 
    \STATE {\bfseries Begin}
    Compute $ \nabla_\theta \mathcal{L}'(\mathcal{S}; \theta)$ i.e. the target gradient.
    \STATE {\bfseries For} $r = 1, \dots, R$ restarts:
    \STATE \quad Randomly initialize perturbations $\Delta^r \in \mathcal{C}$
    \STATE \quad {\bfseries For} $j = 1, \dots, M$ optimization steps:
    \STATE \quad \quad Apply data augmentation to samples $(x_i+\Delta^r_i)_{i=1}^N$
    \STATE \quad \quad Compute alignment loss, $\mathcal{A}(\Delta^r, \theta)$ as in  Eq. \ref{eq:alignment_loss} 
    \STATE \quad \quad Update $\Delta^r$ with a step of signed Adam 
    \STATE \quad \quad Project onto $||\Delta^r||_\infty \leq \varepsilon$
    \STATE Choose $\Delta^*$ as $\Delta^r$ with minimal value in $\mathcal{A}(\Delta^r,\theta)$ 
    \STATE {\bfseries Return} perturbations $\Delta^*$
\end{algorithmic}
\end{algorithm}


\subsection{Crafting Perturbations}
It has been demonstrated in Witches' Brew \cite{geiping2020witches} that bounded perturbations to training data can be crafted to manipulate the gradient of a network trained on this data. We adapt gradient manipulation to the problem of general performance degradation. We estimate the outer objective $(1)$ by training a network $F$ on a clean dataset $\mathcal{S}$ and then crafting perturbations to minimize the following objective: 

\begin{equation}
    \mathcal{A}(\Delta, \theta) = 1-~\frac{\big \langle \nabla_\theta \mathcal{L}'(\mathcal{S}; \theta), \nabla_\theta \mathcal{L}(\mathcal{S} + \Delta; \theta) \big \rangle}{\Vert \nabla_\theta \mathcal{L}'(\mathcal{S}; \theta)\Vert_2 \cdot \Vert \nabla_\theta \mathcal{L}(\mathcal{S} + \Delta; \theta) \Vert_2 }
\label{eq:alignment_loss}
\end{equation}
where $\mathcal{L}'(F(x_i; \theta), y_i) $ is the ``reverse" cross entropy loss \cite{huang2019understanding} which discourages the network $F$ from classifying $x_i$ with label $y_i$. Specifically, the reverse cross entropy loss for a sample $(x,y)$ with one-hot label $y$ is given by:
\[\mathcal{L}'(F(x; \theta), y) = -\log[1-p_\theta(x)_{y \neq 0}]\]
where $p_\theta(x)_{y \neq 0}$ denotes the entry of the softmax of $F(x;\theta)$ corresponding to the class specified by the one-hot label $y$.
For notational simplicity, we denote the \textit{target gradient}
\begin{equation}
    \label{eq:targ_grad}
    \nabla_\theta \mathcal{L}'(\mathcal{S}; \theta) = \nabla_\theta \sum_{(x_i,y_i) \in \mathcal{S}}\mathcal{L}'(F(x_i; \theta), y_i) 
\end{equation}
and the \textit{crafting gradient}
\[ \nabla_\theta \mathcal{L}(\mathcal{S} + \Delta; \theta) = \nabla_\theta \left( \sum_{(x_i,y_i) \in \mathcal{S}}\mathcal{L}(F(x_i + \Delta_i; \theta), y_i) \right). \]

Put simply, we seek to align the training gradient of the perturbed data with the gradient of the reverse cross-entropy loss on the clean data, $\mathcal{S}$. Ideally, when a network trains on the perturbed data, the perturbations cause the training gradient of this network at each parameter vector to be aligned with the gradient of the reverse cross entropy loss on the clean data. This in turn would decrease the reverse cross entropy loss on the clean data, causing a network trained on the perturbed data to converge to a minimum with poor generalization on the distribution from which $\mathcal{S}$ was sampled. 

In order to enforce the constraints in $\mathcal{C}$, we employ projected gradient descent (PGD) as in \citet{madry_towards_2017} on the perturbations, alternately minimizing Eq. \ref{eq:alignment_loss} and projecting onto an $l_\infty$ ball. Algorithm \ref{alg:crafting} details this procedure. 

Additionally, we employ techniques found in previous poisoning work such as restarts and differentiable data augmentation in the crafting procedure in order to improve the success of our perturbations \cite{huang_metapoison:_2020,geiping2020witches}. 

We pre-train a model in order to estimate the target gradient, and the crafting gradients as already trained models have been shown to provide the most stable perturbations in \citet{geiping2020witches, huang_metapoison:_2020}. Additional details about the training procedure and the crafting procedure can be found in the appendix \ref{ap:training}. 

\begin{figure*}[thb]
    \centering
    \includegraphics[scale=0.7]{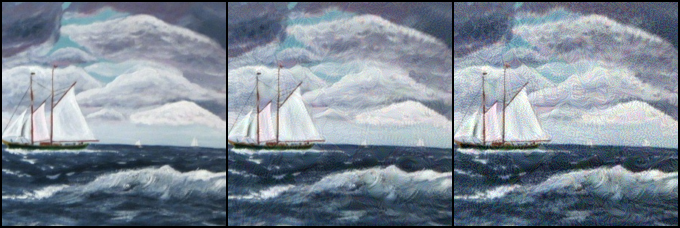}
    \caption{Randomly selected example perturbations to ImageNet datapoint (class ``schooner"). \textbf{Left}: unaltered base image. \textbf{Middle}: $\varepsilon = 8/255$ perturbation. \textbf{Right}: $\varepsilon=16/255$ perturbation.}
    \label{fig:imagenet_grid}
\end{figure*}

\section{Experimental Results}

\label{results}
\subsection{Setup}
We establish baselines for our method on both the ILSVRC2012 dataset (ImageNet) \cite{russakovsky_imagenet_2015} and the CIFAR-10 dataset \cite{krizhevsky2009learning}. ImageNet consists of over $1$ million images coming from $1000$ classes. For these experiments, to save time crafting, we use a pre-trained model from \texttt{torchvision} (see  \url{https://pytorch.org/docs/stable/torchvision/models.html}). For memory constraints, we then craft poisons in batches of $25,000$, replacing the target gradient estimate each split. We train ImageNet models for $40$ epochs.

CIFAR-10 consists of $50,000$ images coming from $10$ classes. For the baseline experiments, we calculate the target gradient on the entire training set. We train CIFAR-10 models for $50$ epochs. 

Unless otherwise stated, we craft our poisons by choosing ResNet-18 \cite{he_deep_2015} as the architecture for $F$ using $8$ restarts and $240$ optimization steps using signed Adam as in Witches' Brew. 

For evaluation, we train a new, randomly initialized network from scratch. We test our method both on the network which was used for crafting, and in a black-box setting where the network architecture is unknown to the practitioner. See subsection \ref{subsec:black_box} for more details on evaluation.

\subsection{ImageNet}
To test our method on an industrial-scale dataset, we first craft perturbations to  ImageNet. In this setting, we deploy a full crafting procedure which mimics the scenario where a company already has a large corpus of data they wish to release. In this case, the company can train the clean model $F$ on this data, and estimate the average target gradient over this training set. 

\par 
 We find that we are able to significantly degrade the validation accuracy, and in the case of the largest perturbation, decrease it by more than $58\%$ (see Table \ref{tab:imagenet_baseline}).  Visualizations for these perturbations can be found in Figure \ref{fig:imagenet_grid}.

\begin{table}[thb]
\caption{Comparison of validation accuracies of a ResNet-18 \cite{he_deep_2015} trained on perturbed data crafted with different  $\varepsilon$-bounds (using ResNet-18) on ImageNet.}
\label{tab:imagenet_baseline}
\vskip 0.15in
\begin{center}
\begin{small}
\begin{sc}
\begin{tabular}{lc}
\toprule
$\varepsilon$-bound & Validation Acc. (\%) $\downarrow$ \\
\midrule
$0/255$ (clean)    & $65.70$  \\
$8/255$ & $37.58$ \\
$16/255$ & $ 27.49$ \\

\bottomrule
\end{tabular}
\end{sc}
\end{small}
\end{center}

\end{table}

\subsection{Baseline Comparison}
\label{baseline}

Additionally, we establish baseline comparisons for our method on CIFAR-10 by comparing our method to other poisoning methods, data manipulations, and to performance on clean non-poisoned data. We compare our method to the following alternatives:

\textbf{TensorClog} \cite{shen_tensorclog:_2019}: We compare our method to TensorClog, which, to our knowledge, is the only previously existing indiscriminate poisoning attack which has been shown to work on modern deep networks. TensorClog was designed primarily for white-box transfer learning based attacks where a known feature extractor is frozen for evaluation. However, to produce a fair comparison, we re-implement their objective into our crafting regime. This objective aims to cause vanishing training gradients to prevent a network from training properly on the dataset. 

\textbf{Random Noise}: In order to tease apart the effect of crafted versus non-crafted perturbations on validation accuracy, we also compare our method to fixed, random additive noise. Since our PGD based perturbation usually results in a perturbation close to the ``corners" of the $\ell_\infty$ ball, i.e. most pixels are perturbed by the maximum allowed value, we enforce that the noise is of the maximum allowable level for our constraint set. 

Comparisons are made with the same $\varepsilon$-bound and the same training procedure from a randomly initialized ResNet-18 model. We find that our alignment method significantly outperforms these alternatives with the same $\varepsilon$-bound. In the case of ResNet18, we degrade validation accuracy by close $50$ more percentage points than TensorClog, the next best.  

\begin{table}[t]
\caption{Comparison of different poisoning methods. All methods were employed with $\varepsilon$-bound $8/255$ using the same ResNet-18 architecture and training procedure on CIFAR-10. Confidence intervals are of radius one standard error.}
\label{tab:baseline_comparison}
\vskip 0.15in
\begin{center}
\begin{small}
\begin{sc}
\begin{tabular}{cc}
\toprule
Method & Validation Acc. (\%) $\downarrow$ \\
\midrule
None & 93.16 $\pm$ 0.08 \\
Tensorclog & 84.24 $\pm$ 0.17 \\
Random Noise & 90.52 $\pm$ 0.08  \\
Alignment (Ours) & \textbf{36.83 $\pm$ 1.94} \\
\bottomrule
\end{tabular}
\end{sc}
\end{small}
\end{center}
\vspace{-10pt}
\end{table}

\begin{figure*}[bht!]
    \centering
    \includegraphics[scale=0.45]{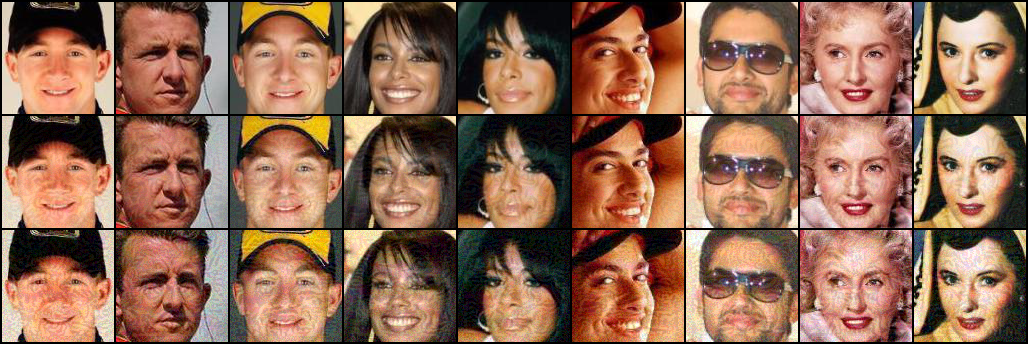}
    \caption{Samples of poisoned CelebA images. \textbf{Top}: unaltered images. \textbf{Middle}: $\varepsilon = 8/255$. \textbf{Bottom}: $\varepsilon=16/255$.}
    \label{fig:celeba_grid}
\end{figure*}

\subsection{Results in the Black-box Setting}
\label{subsec:black_box}
The results in Tables \ref{tab:imagenet_baseline} and \ref{tab:baseline_comparison} are in a grey-box setting where we do not know the victim's model initialization or any information about batching, but we do know their architecture training details like optimizer choice. In order to more strenuously test our method in a realistic setting, we also test our poisons on architectures and training procedures completely unknown during poison crafting. 
This mimics the black-box setting wherein the practitioner does not know how a victim plans to train on their scraped data. Specifically, we train networks on our crafted poisons using the setup from an independent, well known repository for training CIFAR-10 models \footnote{Training routine taken from widely used repository: https://github.com/kuangliu/pytorch-cifar.}. \par 
We validate our method on commonly used architectures including VGG19 \cite{simonyan_very_2014}, ResNet-18 \cite{he_deep_2015}, GoogLeNet \cite{szegedy_going_2015}, DenseNet121 \cite{huang_densely_2016}, and MobileNetV2 \cite{sandler_mobilenetv2:_2018}. These results are presented in Table \ref{tab:epsilon_comparison_full}. We see that even at very small epsilon constraints, the proposed method is able to nearly halve the clean validation accuracy of many of these popular models. 

\begin{table*}[h!]
\caption{Comparison of CIFAR-10 validation accuracies (\%) for different $\varepsilon$-bounds and victim networks. All poisons crafted using a ResNet-18 architecture.}
\label{tab:epsilon_comparison_full}
\vskip 0.15in
\begin{center}
\begin{small}
\begin{sc}
\begin{tabular}{rcccc}
\toprule
Network & $0/255$ (clean) & $4/255$ & $8/255$ & $16/255$  \\
\midrule
VGG19 & 90.76 $\pm$ 0.14 & 66.58 $\pm$ 0.58 & 48.27 $\pm$ 0.92 & 31.86 $\pm$ 1.47 \\
ResNet-18 & 93.16 $\pm$ 0.08 & 68.15 $\pm$ 0.55 & 55.34 $\pm$ 0.80 & 43.14 $\pm$ 1.60 \\
GoogLeNet & 93.87 $\pm$ 0.07 & 71.02 $\pm$ 0.15 & 55.49 $\pm$ 0.58 & 45.37 $\pm$ 0.94 \\
DenseNet121 & 93.80 $\pm$ 0.05 & 70.12 $\pm$ 0.12 & 49.23 $\pm$ 1.58 & 38.51 $\pm$ 1.07 \\
MobileNetV2 &  91.10 $\pm$ 0.10 & 66.71 $\pm$ 0.60 & 51.03 $\pm$ 0.86 & 39.19 $\pm$ 1.76 \\
\bottomrule
\end{tabular}
\end{sc}
\end{small}
\end{center}
\end{table*}

\begin{table}[t]
\caption{Comparison of CIFAR-10 validation accuracies (\%) for different $\varepsilon$-bounds and victim networks. All poisons crafted using a ResNet-18 architecture with estimated target grad.}
\label{tab:epsilon_comparison_partial}
\vskip 0.15in
\begin{center}
\begin{small}
\begin{sc}
\begin{tabular}{rccc}
\toprule
Network & $4/255$ & $8/255$ & $16/255$  \\
\midrule
VGG19 & 74.6 & 56.57 & 30.02 \\
ResNet-18 & 74.11 & 56.65 & 29.62 \\
GoogLeNet & 76.41 & 62.44 & 33.68 \\
DenseNet121 & 75.61 & 57.29 & 31.51 \\
MobileNetV2 &  71.54 & 49.14 & 24.45 \\
\bottomrule
\end{tabular}
\end{sc}
\end{small}
\end{center}

\end{table}

\subsection{Stability}
\label{subsec:stability}
While our method is able to degrade validation accuracy under normal training conditions, a question remains whether the method is stable to poisoning defenses and modifications in training procedures. Would this improve the validation accuracy of a model trained on the perturbed data? To this end, we investigate several avenues. 

First, we investigate whether existing poisoning defenses lessen the effects of our perturbations. Many existing defenses are designed for settings which are significantly different than our proposed attack. For example, many defenses assume that there is a small amount of poisoned data, and that this will be anomalous in feature space \cite{steinhardt_certified_2017-1,tran_spectral_2018, peri_deep_2020, chen2018detecting}. These types of defenses are best suited for scenarios in which the same perturbations are applied identically to each data point, as in patch based attacks, or when the defender has access to a large corpus of trustworthy data on which to train a feature extractor to filter out poisoned images based on their similarity to clean reference samples. Moreover, these anomaly detection defenses work under the assumption that the majority of data will not be poisoned. It was demonstrated in \cite{geiping2020witches} that modifications meant to alter gradients of a network trained from scratch do not produce data which is anomalous in feature space of the poisoned model. Furthermore, since we poison the entire dataset, the trained model's feature space can no longer be thought of as containing clean and perturbed elements. Thus, the heuristic that poisoned data will somehow be outliers in feature space does not apply. 

However, another family of defenses leverages  differential privacy (DPSGD) as a defense against poisoning \cite{ma_data_2019, hong_effectiveness_2020}. Since differentially private models are, by construction, insensitive to minor changes in the training set, they may be resistant to different poisoning methods. By clipping and noising gradients, these defenses aim to limit the effect of perturbations placed on data. In theory, this defense is applicable to our perturbations. However, we test the defense proposed in \citet{hong_effectiveness_2020} against our method and find that the drop in validation accuracy we saw in previous experiments remains even when training with DPSGD.

In addition to methods designed for defending against data poisoning, we also test the potency of our poisons under both Gaussian smoothing and random additive noise (of the same magnitude as our perturbations) during training. These modifications test how brittle our crafted perturbations are to modification. For Gaussian smoothing, we use a radius $r=2$. We find that our attack is stable under all the discussed training modifications, with none of the proposed defenses substantially improving results. These results can be found in Table \ref{tab:defenses}.

\begin{table}[thb]
\caption{Comparison of validation accuracies of a ResNet-18 with different defenses. All runs use $\varepsilon = 8/255$. }
\label{tab:defenses}
\vskip 0.15in
\begin{center}
\begin{small}
\begin{sc}
\begin{tabular}{cc}
\toprule
Defense & Validation Acc. (\%) $\downarrow$ \\
\midrule
None    & $44.82$  \\
DPSGD & $44.18$ \\
Random $\ell_\infty$ noise & $48.36$ \\
Gaussian smoothing & $24.26$ \\

\bottomrule
\end{tabular}
\end{sc}
\end{small}
\end{center}
\end{table}

\section{Facial Recognition}

\begin{table*}[htb]
\caption{Identification and verification accuracy of ResNet-18 trained on clean data and poisoned data of varying $epsilon$ radius. Note that in all cases, both identification and verification accuracy drop substantially when the model is trained on poisoned data.}
\vskip .15in
\centering

\begin{sc}
\begin{small}

\begin{tabular}{crr|rrrr}
\toprule 
& \multicolumn{2}{l}{Identification} & \multicolumn{4}{|l}{Verification}     \\
$\varepsilon$ &
  \multicolumn{1}{l}{CelebA top1} &
  \multicolumn{1}{l|}{CelebA top5} &
  \multicolumn{1}{l}{LFW} &
  \multicolumn{1}{l}{CFP} &
  \multicolumn{1}{l}{AgeDB} &
  \multicolumn{1}{l}{VGG2\_FP} \\
  \midrule 
Clean & 91.02\%          & 94.65\%         & 97.93\% & 83.84\% & 85.40\% & 84.96\% \\
8/255                    & 85.54\%          & 91.67\%         & 95.20\% & 76.40\% & 76.88\% & 81.88\% \\
16/255                   & 61.53\%          & 73.67\%         & 76.03\% & 62.81\% & 59.45\% & 61.78\% \\
32/255                   & 39.77\%          & 54.37\%         & 70.55\% & 62.94\% & 59.87\% & 59.32\% \\
\bottomrule
\end{tabular}

\end{small}
\end{sc}
\label{tab:face}
\end{table*}

\begin{table*}[htb]
\caption{Poisons generated from more a powerful model also result in a stronger attack. Here, we attack a ResNet-18 model with poisons generated with two different surrogate models at $\epsilon=8/255$}
\vskip 0.15 in
\centering
\begin{sc}
\begin{small}

\begin{tabular}{rrr|rrrr}
\toprule 
\multicolumn{1}{l}{} &
  \multicolumn{1}{l}{Identification} &
  \multicolumn{1}{l|}{} &
  \multicolumn{1}{l}{Verification} &
  \multicolumn{1}{l}{} &
  \multicolumn{1}{l}{} &
  \multicolumn{1}{l}{} \\
\multicolumn{1}{l}{Surrogate Model} &
  \multicolumn{1}{l}{CelebA top1} &
  \multicolumn{1}{l|}{CelebA top5} &
  \multicolumn{1}{l}{LFW} &
  \multicolumn{1}{l}{CFP} &
  \multicolumn{1}{l}{AgeDB} &
  \multicolumn{1}{l}{VGG2\_FP} \\
  \midrule
No Poisons & 91.02\% & 94.65\% & 97.93\% & 83.84\% & 85.40\% & 84.96\% \\
ResNet-18  & 85.54\% & 91.67\% & 95.20\% & 76.40\% & 76.88\% & 81.88\% \\
ResNet-50  & 79.53\% & 86.88\% & 86.75\% & 74.64\% & 63.78\% & 70.52\% \\
\bottomrule
\end{tabular}
\end{small}
\end{sc}
\label{tab:surrogate}
\end{table*}

While standard classification tasks like ImageNet and CIFAR allow us to establish baselines for our method, many settings where a company may wish to implement our method involve social media user data, often in the form of personal photos. Such data is may be scraped from large social media platforms by competing companies and nefarious actors \cite{cherepanova2021lowkey}. For example, companies like Clearview AI scrape photos from social media sites to train their own facial recognition systems for mass surveillance \cite{clearviewai}. A social media company could even deploy our method simply on thumbnail profile images, which are publicly available to scraping. To determine the utility of our method in preventing unauthorized use in this manner, we deploy our algorithm on facial recognition benchmarks.

By and large, facial recognition works in the regime of transfer learning. Facial recognition models are often pre-trained on a many-way classification problem using images unrelated to the testing identities. Then, during testing, the classification head is removed, and the pre-trained model is used only for calculating the test image embeddings. The identity of test images is then inferred by $k$-nearest neighbors in the embedding space. This setup is known to make attacks against facial recognition systems more difficult than attacks against standard classification tasks \cite{cherepanova2021lowkey}. Evasion attacks like the one found in \citet{cherepanova2021lowkey} modify images at test-time. We deploy our method on the complementary task of degrading the quality of the feature extractor. This adds a layer of difficulty to our attack, as the we are only able to affect the quality of the embedding, but we are not able to perturb the ``anchor" images used to classify new test samples.

\subsection{Setup}
The CelebA dataset contains 10177 identities \cite{celeba}. Following standard procedure as in \cite{facelowshot}, we remove 371 identities with too few images, use 8806 identities for pre-training and 1000 identities for testing. We use ResNet-18 and Resnet-50 as backbone architectures. During pre-training, we use the popular \textit{Cosface} classification head \cite{wang2018cosface}.  During poison dataset generation, we take 50 gradient steps with signAdam and a single random restart. In additional to the CelebA dataset, we also test the attacked model's verification accuracy on four other face datasets: Labeled Faces in the Wild (LFW) \cite{huang2008labeled}, Celebrities in Frontal-Profile data set (CPF) \cite{sengupta2016frontal},  AgeDB \cite{moschoglou2017agedb}, VGGFace2 \cite{cao2018vggface2}.  We use the \textit{face.evoLVe} repository \cite{faceevolve} to run all of our facial recognition experiments.

\subsection{Results}
In Table \ref{tab:face}, we see that both identification and verification accuracy drop materially when a model is trained on our crafted dataset. When trained on poisoned data with $\epsilon=16/255$, CelebA top 1 accuracy drops by 36\% to 61\%, . Similarly, verification accuracy also drops by as much as 35\%. Note that in many commercial applications, even a few percentage drop can tip the scales for a company to maintain a competitive advantage.

Poisons crafted on a more powerful model also transfer better in our experiments. Specifically, poisons generated with ResNet-50 always reduce both identification and verification accuracy more than poisons generated with ResNet-18 (see Table \ref{tab:surrogate}). This experiment suggests that if one wants to make the poisoned dataset more potent, one could simply use a larger capacity model to generate poisons.

\begin{figure*}[thb!]
    \centering
    \includegraphics[scale=2.8]{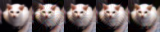}
    \caption{Example CIFAR-10 Image crafted with different regularizers. From left to right: clean image, no regularizer, $\ell_2$ regularization, SSIM regularization, TV regularization. All crafted with perturbation bound $\varepsilon = 8/255$.}
    \label{fig:regularization}
\end{figure*}

\section{Online Modification}
\label{sec:online}
The procedure outlined in Algorithm \ref{alg:crafting} works well when a practitioner already has access to a large corpus of data they intend to release. In this case, they can train a network to estimate the target gradient in Eq. \ref{eq:targ_grad}, and optimize all poisons jointly to align with this target gradient. However, for many applications, like social media release, this may not be possible. First, the initial dataset may not be big enough to allow one to train a well performing model to estimate the target gradient accurately. Second, the poisoned dataset may have to be optimized sequentially as new data continue to be released or independently at the user level.

To understand the severity of the mentioned problems, we perform several ablations and modifications to our method. First, we test whether the effective perturbations can be made with an approximated target gradient. We already estimate the target gradient on the distribution $\mathcal{D}$ by calculating the empirical target gradient on $\mathcal{S}$, but if a practitioner has access to only a small amount of data, this could degrade the quality of the target gradient to the point where perturbations become ineffective. 

To test this, we first re-run experiments outlined in Table \ref{tab:epsilon_comparison_full}, but with a target gradient approximated from a random subset of $10\%$ of the data. We find that perturbations remain effective when using this approximated target gradient. In some cases, the approximated target gradient even produces better results than the full target gradient. In theory, as long as the estimated empirical target gradient ``well" approximates the full empirical target gradient (i.e. the cosine similarity between the two is high), then perturbations crafted using the former will produce similar performance to those crafted on the latter since the target gradient is not differentiated through when crafting the perturbations. These results give confidence to practitioners who wish to poison a stream of data without having access to a large amount at the start of the process.

Additionally, we test the poisons effectiveness when the surrogate model is trained with only a small subset of the data and the target gradient is estimated only with the same small subset. This is equivalent to splitting the dataset into different subsets $\{\mathcal{S}_i\}_{i=1}^M$ where $\mathcal{S}_i \subset S$ and poisoning each of these subsets as if it were its own standalone dataset. In theory, this could harm performance since the model trained on a small subset of the data will usually perform poorly compared to a model trained on the entire dataset. This could lead to a bad target gradient estimate, and in turn, poorly performing poisons. However, even though the surrogate model's performance is indeed poorer, we find that the poisons generated with such a model are still effective in practice at decreasing validation accuracy. More specifically, in Table \ref{tab:online}, we see that when using only 10\% of data for estimating the target gradient, we can decrease the attacked model's accuracy to a comparable level as using 100\% of the data.

Finally, we test the effectiveness of the perturbations when the poisons are optimized independently as opposed to jointly. This is a practical consideration for many companies since user data could be uploaded at any given moment in time, and the practitioner might not be able to wait for a batch of data to calculate perturbations. Thus, it is necessary to be able to calculate perturbations one at a time. 

To mimic this scenario in our implementation, we simply detach the denominator in Eq. \ref{eq:alignment_loss} from the network's computation graph. Note that this change modifies the objective problem in Eq. \ref{eq:alignment_loss} to optimize the inner product of the target gradient and each \textit{individual} crafting gradient versus the cosine similarity between the target gradient and the \textit{full} crafting gradient.

 We find that poisons can be just as effective, sometimes more so, when optimized independently (see Table \ref{tab:online}). Mathematically speaking, as long as the norm of the crafting gradient in Eq. \ref{eq:alignment_loss} is stable to small changes in any individual perturbation, then detaching the denominator will not affect the perturbations as we optimize with a signed gradient method (signed Adam) in Alg. \ref{alg:crafting}. 
 
To further demonstrate this idea formally, we present the following straightforward result: 
 
 \begin{proposition}
 Fix a pixel position denoted by $*$. If $\exists \varepsilon > 0$ so that $\forall x_j \in \mathcal{S}$, in the $\ell_\infty$-ball about $x_j$ of radius $\varepsilon$, the following inequality holds:
 \begin{gather*} \bigg \vert  \frac{\partial}{\partial \Delta_j^*} \big(\Vert \nabla_\theta \mathcal{L}(\mathcal{S} + \Delta; \theta) \Vert_2 \big) \bigg \vert
 \\
<
\bigg \vert \frac{\partial}{\partial \Delta_j^*}  \bigg( \big \langle \mathcal{T}, \nabla_\theta \mathcal{L}(x_j + \Delta_j; \theta) \big \rangle \bigg ) \bigg \vert
\end{gather*}
i.e.  - The derivative w.r.t. $\Delta_j^*$ of the norm of the full crafting gradient is bounded in magnitude by the  derivative w.r.t $\Delta_j^*$ of the inner product between the individual crafting gradient and target gradient, $\mathcal{T}$,

Then, our online crafting mechanism produces the same perturbation to pixel $\Delta^*$ (in the $\varepsilon$-ball) as the full non-online version.
 \end{proposition}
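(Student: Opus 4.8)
The plan is to reduce the entire statement to a \textbf{sign comparison}: I will show that at every point of the $\varepsilon$-ball the partial derivative of the full cosine-similarity objective with respect to the pixel $\Delta_j^*$ carries the same sign as the partial derivative of the online inner-product objective. This reduction is justified because Algorithm~\ref{alg:crafting} updates perturbations with signed Adam and then projects onto $\Vert\Delta\Vert_\infty\le\varepsilon$, so only the per-coordinate \emph{sign} of the gradient controls the update direction. If the two signs agree throughout the ball, the two crafting trajectories coincide coordinatewise and hence terminate at the same perturbation of the pixel.

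First I would record the additive structure of the crafting gradient. Writing $G(\Delta)=\nabla_\theta \mathcal{L}(\mathcal{S}+\Delta;\theta)$ for the full crafting gradient and $g_j=\nabla_\theta \mathcal{L}(x_j+\Delta_j;\theta)$ for the individual one, we have $G(\Delta)=\sum_k g_k$ with each $g_k$ depending only on $\Delta_k$. Hence $\langle\mathcal{T},G\rangle=\sum_k\langle\mathcal{T},g_k\rangle$, and differentiating with respect to $\Delta_j^*$ kills every summand except $k=j$, giving the key identity
\begin{equation*}
\frac{\partial}{\partial \Delta_j^*}\big\langle \mathcal{T}, G(\Delta)\big\rangle = \frac{\partial}{\partial \Delta_j^*}\big\langle \mathcal{T}, g_j\big\rangle =: A .
\end{equation*}
Thus the online gradient direction is exactly $A$, the right-hand side of the hypothesis. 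Minimizing $\mathcal{A}$ in Eq.~\ref{eq:alignment_loss} is the same as maximizing $f=\langle\mathcal{T},G\rangle/\Vert G\Vert_2$, where I take $\mathcal{T}$ to be the unit target gradient so that the constant factor $\Vert\nabla_\theta\mathcal{L}'\Vert_2$ (which affects no sign) is absorbed.

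Next I would differentiate $f$ by the quotient rule, writing $B:=\partial_{\Delta_j^*}\Vert G\Vert_2$, to obtain
\begin{equation*}
\frac{\partial f}{\partial \Delta_j^*} = \frac{1}{\Vert G\Vert_2}\left( A - \frac{\langle\mathcal{T},G\rangle}{\Vert G\Vert_2}\,B\right).
\end{equation*}
Since $\Vert G\Vert_2>0$, the sign of $\partial f/\partial\Delta_j^*$ equals the sign of $A-(\langle\mathcal{T},G\rangle/\Vert G\Vert_2)\,B$. Cauchy--Schwarz gives $|\langle\mathcal{T},G\rangle|/\Vert G\Vert_2\le\Vert\mathcal{T}\Vert_2=1$, so the quantity subtracted from $A$ has magnitude at most $|B|$. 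The hypothesis $|B|<|A|$ then yields $|(\langle\mathcal{T},G\rangle/\Vert G\Vert_2)\,B|\le|B|<|A|$, and subtracting from $A$ a number strictly smaller in magnitude than $|A|$ cannot flip its sign; hence $\sign(\partial f/\partial\Delta_j^*)=\sign(A)$ at every point where the hypothesis holds.

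Finally I would close the argument by invoking the sign-only nature of the update: with the per-pixel gradient signs of the two objectives agreeing throughout the $\varepsilon$-ball, signed Adam produces identical coordinate updates and the $\ell_\infty$ projection saturates $\Delta_j^*$ to the same endpoint, so the two procedures yield the same perturbation at pixel $*$. I expect the \textbf{main obstacle} to be controlling the coefficient $\langle\mathcal{T},G\rangle/\Vert G\Vert_2$ multiplying $B$: this is precisely where the Cauchy--Schwarz bound must be combined with the hypothesis, and it is also the step that quietly requires reading $\mathcal{T}$ as the \emph{unit} target direction rather than the raw gradient (otherwise the bound is $\Vert\mathcal{T}\Vert_2$ and the hypothesis would need a matching factor). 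A secondary subtlety is justifying that signed Adam depends on the gradient only through its coordinatewise sign; this is immediate for sign-SGD and persists for the signed Adam variant as long as a coordinate's raw gradient sign stays constant along the trajectory, which is exactly guaranteed by the sign agreement established above.
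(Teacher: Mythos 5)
Your argument is essentially the paper's own proof: both apply the quotient rule to the cosine-similarity objective, bound the coefficient of the norm-derivative term by $1$ via Cauchy--Schwarz (the paper's $|\alpha|\le 1$ step), invoke the hypothesis to show the correction term cannot flip the sign of the inner-product derivative, and conclude via the sign-only nature of the signed-Adam update. Your two added observations --- the additive decomposition $G=\sum_k g_k$ justifying $\partial_{\Delta_j^*}\langle\mathcal{T},G\rangle=\partial_{\Delta_j^*}\langle\mathcal{T},g_j\rangle$, and the fact that the hypothesis only matches the Cauchy--Schwarz bound when $\mathcal{T}$ is read as the unit target direction (otherwise a factor of $\Vert\mathcal{T}\Vert_2$ is unaccounted for) --- are both points the paper's proof leaves implicit, and the second one is a genuine gap in the paper's bookkeeping that your normalization convention repairs.
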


 \begin{proof}
 See appendix \ref{ap:proof}.
 \end{proof}

 This adjustment makes the algorithm practical for real world use as the poisons can be generated at the user level in a distributed setting without a centralize poison generation process.



\begin{table}[]
\caption{ \textbf{\% of data used} indicates the percentage of data used for both training the surrogate model and target gradient estimation. The denominators of Eq. \ref{eq:alignment_loss} is detached when the poisons are independently crafted. We note that we can still double the validation error of the attacked model when only 5\% of data is used. }
\vskip 0.15in
\label{tab:online}
\centering
\begin{small}
\begin{sc}
\begin{tabular}{rlr}
\toprule 
\multicolumn{1}{l}{\% of data used} &
  \begin{tabular}[c]{@{}l@{}}Independently \\ Crafted\end{tabular} &
  \multicolumn{1}{l}{\begin{tabular}[c]{@{}l@{}}Poisoned\\ Validation\\ Accuracy (\%)\end{tabular}} \\
  \midrule 
100\% & No  & 46.66 \\
10\%  & No  & 44.09                     \\
5\%   & No  & 74.64                     \\
10\%  & Yes & 36.42                     \\
5\%   & Yes & 59.59                     \\

\bottomrule
\end{tabular}
\end{sc}
\end{small}
\end{table}

\label{online}

\section{Regularization}
While projecting onto an small $\ell_\infty$ ball enforces that the perturbations are not overly conspicuous, for some applications, like user uploaded images, further regularization to impose visual similarity may be desirable. Thus, we test a variety of regularization terms to the alignment loss in Eq. \ref{eq:alignment_loss}. Specifically,
we test a straightforward $\ell_2$ penalty on the norm of the perturbation, a total variation penalty (TV), and a structural similarity (SSIM) regularizer which has been shown to increase visible quality of perturbed data \cite{cherepanova2021lowkey}. We find that the regularizers decrease the visibility of the perturbations in exchange for an increase in validation accuracy. Thus, a practitioner can choose the strength of the regularizer to control the tradeoff between visibility of the perturbation and success of the poisons. Effects of the regularizers compared to an unregularized validation run can be found in Table \ref{tab:regularization}. Additionally, visualizations of images produced using the regularization terms can be found in Figure \ref{tab:regularization}.

\begin{table}[ht!]
\caption{Comparison of validation accuracy of ResNet-18 trained on poisons generated using various regularization terms. All runs use $\varepsilon = 8/255$. }
\label{tab:regularization}
\vskip 0.15in
\begin{center}
\begin{small}
\begin{sc}
\begin{tabular}{cc}
\toprule
Regularizer & Validation Acc. (\%) $\downarrow$ \\
\midrule
None    & $44.82$  \\
$\ell_2$ & $69.8$ \\
SSIM & $51.09$ \\
TV & $53.39$ \\

\bottomrule
\end{tabular}
\end{sc}
\end{small}
\end{center}
\vskip -0.1in
\end{table}

\section{Conclusions}
We develop an indiscriminate poisoning attack which allows machine learning practitioners to release data without the concern of losing the competitive advantage they gain over rival organizations by possessing access to their proprietary data. We achieve state-of-the-art results for availability attacks on modern deep networks, and we adapt our method to online modification for use by social media companies. Future areas of investigation include improving results on facial recognition tasks, possibly by attacking anchor images in addition to degrading the feature extractor. 

\section{Acknowledgements}
This work was supported by the DARPA GARD and DARPA YFA programs. Additional support was provided by DARPA QED and the National Science Foundation DMS program.
\bibliography{zotero_library,references_manual}
\bibliographystyle{icml2021}

\clearpage
\newpage
\appendix
\section{Appendix}

\subsection{Training Details}
\label{ap:training}

\textbf{CIFAR-10}
For our CIFAR-10 experiments, we train a ResNet18 model for 40 epochs in order to craft the perturbations. The model is trained with SGD and multi-step learning rate drops. However, for the experiments in Table \ref{tab:online}, since fewer data points are used for training the surrogate model, we increase the number of epochs, so the number of iteration is similar to regular training. For example, if we use only 10\% of the data, then we would increase the number of epochs by 10 times, so that the model is trained for a similar number of iterations.

Tables \ref{tab:baseline_comparison}, \ref{tab:defenses}, \ref{tab:regularization} all train a randomly initialized network using this same framework (grey-box). 

However, in Tables \ref{tab:epsilon_comparison_full}, \ref{tab:epsilon_comparison_partial} \ref{tab:online}, we use the black-box repository referenced in the main body to train models. We train for 50 epochs using this repository's setup (cosine learning rate decay, etc.). 

\textbf{ImageNet} 
For our ImageNet experiments, we use a pretrained ResNet18 model to craft poisons, and then train a new, randomly initialized ResNet18 model on these crafted poisons for 40 epochs with multi-step learning rate drops.

\textbf{Facial Recogntion} 
For our Celeb-A experiments, we train a ResNet18 and a ResNet50 for 125 epochs as surrogate models, which are used to craft poisons. We then train a new randomly initialized ResNet18 model on these crafted poisons for 125 epochs with multi-step learning rate drops. 

\textbf{DPSGD Defense}
For the DPSGD defense, we first clip employ clipping of $0.1$ and then  add noise to the gradients with parameter $\sigma = \text{clip} * 0.01$.

\subsection{Proof of Proposition}
\label{ap:proof}
\begin{proof}
Recall the alignment loss: 
\begin{equation}
    \mathcal{A}(\Delta, \theta) = 1-~\frac{\big \langle \nabla_\theta \mathcal{L}'(\mathcal{S}; \theta), \nabla_\theta \mathcal{L}(\mathcal{S} + \Delta; \theta) \big \rangle}{\Vert \nabla_\theta \mathcal{L}'(\mathcal{S}; \theta)\Vert \cdot \Vert \nabla_\theta \mathcal{L}(\mathcal{S} + \Delta; \theta) \Vert }
\end{equation}

As the left hand term in the inner product, the target gradient, does not depend on the perturbations $\Delta$, it may be treated as a constant, and denoted simply as $\mathcal{T}$. As in our algorithm, we denote $\Vert \cdot \Vert = \Vert \cdot \Vert_2$. Also, WLOG, we need only look at the sign of the following derivative for some arbitrary perturbation $\Delta_j^*$:

\[ \frac{\partial}{\partial \Delta_j^*} \bigg[ \overbrace{ 
~\frac{\big \langle \mathcal{T}, \nabla_\theta \mathcal{L}(\mathcal{S} + \Delta; \theta) \big \rangle}{\Vert \mathcal{T}\Vert \cdot \Vert \nabla_\theta \mathcal{L}(\mathcal{S} + \Delta; \theta) \Vert }}^{\alpha} 
\bigg] \]

On the one hand, if we detach the denominator from the computation graph, the following derivative is used to update the perturbation $\Delta_j$:

\[
~\frac{\frac{\partial}{\partial \Delta_j^*} \bigg( \big \langle \mathcal{T}, \nabla_\theta \mathcal{L}(x_j + \Delta_j; \theta) \big \rangle \bigg )}{\Vert \mathcal{T}\Vert \cdot \Vert \nabla_\theta \mathcal{L}(\mathcal{S} + \Delta; \theta) \Vert }  \] 

On the other hand, the gradient of the full objective is: 

\begin{gather*}
\overbrace{~\frac{\frac{\partial}{\partial \Delta_j^*} \bigg( \big \langle \mathcal{T}, \nabla_\theta \mathcal{L}(x_j + \Delta_j; \theta) \big \rangle \bigg )}{\Vert \mathcal{T}\Vert \cdot \Vert \nabla_\theta \mathcal{L}(\mathcal{S} + \Delta; \theta) \Vert }}^{\beta} 
\\
\\
- \underbrace{~\frac{ \big \langle \mathcal{T}, \nabla_\theta \mathcal{L}(x_j + \Delta_j; \theta) \big \rangle   \Vert \mathcal{T}\Vert \cdot \frac{\partial}{\partial \Delta_j^*} \bigg(\Vert \nabla_\theta \mathcal{L}(\mathcal{S} + \Delta; \theta) \Vert \bigg)}{\bigg(\Vert \mathcal{T}\Vert \cdot  \Vert \nabla_\theta \mathcal{L}(\mathcal{S} + \Delta; \theta) \Vert \bigg) ^2 } }_{\gamma} 
\end{gather*}

Note that $\beta$ is simply the derivative of the detached objective, so the perturbations will be the same if 
\[  \vert \gamma \vert < \vert \beta \vert\]

Simplifying, 
\begin{align*}
\vert \gamma \vert & = \vert \alpha \vert \cdot  ~\frac{ \bigg \vert  \frac{\partial}{\partial \Delta_j^*} \big(\Vert \nabla_\theta \mathcal{L}(\mathcal{S} + \Delta; \theta) \Vert \big) \bigg \vert}{\Vert \mathcal{T}\Vert \cdot  \Vert \nabla_\theta \mathcal{L}(\mathcal{S} + \Delta; \theta) \Vert  } 
\\
& \leq ~\frac{ \bigg \vert  \frac{\partial}{\partial \Delta_j^*} \big(\Vert \nabla_\theta \mathcal{L}(\mathcal{S} + \Delta; \theta) \Vert \big) \bigg \vert}{\Vert \mathcal{T}\Vert \cdot  \Vert \nabla_\theta \mathcal{L}(\mathcal{S} + \Delta; \theta) \Vert }
\\ 
& < 
~\frac{\bigg \vert \frac{\partial}{\partial \Delta_j^*}  \bigg( \big \langle \mathcal{T}, \nabla_\theta \mathcal{L}(x_j + \Delta_j; \theta) \big \rangle \bigg ) \bigg \vert }{\Vert \mathcal{T}\Vert \cdot \Vert \nabla_\theta \mathcal{L}(\mathcal{S} + \Delta; \theta) \Vert } = \vert \beta \vert
\end{align*}

Where the last inequality uses the assumption of the proposition, and holds in the $\varepsilon$-ball around $x_j$. Then, because our crafting algorithm uses signed gradient descent (either Adam or SGD), the perturbations crafted using the online vs. non-online method will be identical. 

\end{proof}

Note that for an alternate presentation of the proposition, we may write the bound on the magnitude of the derivative of the full crafting gradient in the following manner: 

\begin{gather*} \bigg \vert  \frac{\partial}{\partial \Delta_j^*} \big(\Vert \nabla_\theta \mathcal{L}(\mathcal{S} + \Delta; \theta) \Vert \big) \bigg \vert
 \\
<
c_0\bigg \vert \frac{\partial}{\partial \Delta_j^*} \big( \Vert \nabla_\theta \mathcal{L}(x_j + \Delta_j; \theta)\Vert \cdot \cos{\phi} \big ) \bigg \vert
\end{gather*}

Where $\phi$ is the angle between the individual crafting gradient and the fixed target gradient, and $c_0 = \Vert \mathcal{T} \Vert$ is the norm of the target gradient.

\subsection{Hardware and time considerations}
We primarily use an array of GeForce RTX 2080 Ti graphics cards. On $4$ GPUs, pre-training the crafting model on CIFAR-10 typicaly takes $12.5$ minutes. Crafting the perturbations typically takes $140$ minutes per restart for the entirety of CIFAR-10 ($50,000$ images). We typically run $240$ iterations of projected gradient descent. On ImageNet, a batch of $25,000$ Images typically takes just under $17$ hours to craft in a similar manner. 



\end{document}